\newtheorem{theorem}{\bf Theorem}
\newtheorem{lemma}[theorem]{\bf Lemma}
\newtheorem{corollary}[theorem]{\bf Corollary}
\def\thmref#1{Theorem~\ref{#1}}
\def\lemref#1{Lemma~\ref{#1}}
\def\bfmath#1{\mathchoice
        {\mbox{\boldmath$#1$}}%
        {\mbox{\boldmath$#1$}}%
        {\mbox{\boldmath$\scriptstyle#1$}}%
        {\mbox{\boldmath$\scriptscriptstyle#1$}}}%
\def\bfw{\mathbf{w}}
\def\bfV{\mathbf{V}}
\def\bfW{\mathbf{W}}
\def\bbE{\mathbb{E}}
\def\cF{{\mathcal F}}
\def\cM{{\mathcal M}}
\def\cX{{\mathcal X}}
\def\cY{{\mathcal Y}}
\newcommand{\fref}[1]{Figure~\ref{#1}}
\newcommand{\sref}[1]{Section~\ref{#1}}
\def\eqdef{\stackrel{\text{\tiny def}}{=}}
\def\fc{c} 
\def\bffc{\mathbf{\fc}}
\def\ip#1{\langle #1\rangle}
\def\tb{\tau_B} 
\def\gb{g_B} 
\def\ta{\tau_A}
\def\sfs{\xi}
\def\bfsfs{\bfmath{\xi}}
\def\sfshat{\hat{\xi}}
\def\vep{\varepsilon}
\def\II{I}
\def\tII{t_{\II}}
\def\etalower{\eta_{\ell}}
\def\etaupper{\eta_{u}}
\def\Sk#1#2{S^{(#1)}_{#2}}
\def\norm#1#2{\left\Vert #1 \right\Vert_{#2}}
\def\Nfix{N_{\text{\scriptsize fix}}}
\def\subfamilyIJ{\cF_{\II,J}}
\def\subfamily{\cF}
\begin{document}
\begin{frontmatter}
\title{Fundamental limits on the accuracy of demographic inference based on\\ the sample frequency spectrum}
\runtitle{Fundamental limits on demographic inference}

\begin{aug}
\author{\fnms{Jonathan} \snm{Terhorst}\thanksref{t1}\ead[label=e1]{terhorst@stat.berkeley.edu}}
\and
\author{\fnms{Yun S.} \snm{Song}\thanksref{t2}\ead[label=e2]{yss@stat.berkeley.edu}}

\thankstext{t1}{Supported in part by a Citadel Fellowship.}
\thankstext{t2}{Supported in part by an NIH grant R01-GM109454, a Packard Fellowship for Science and Engineering, and a Miller Research Professorship.}

\runauthor{Terhorst and Song}

\affiliation{University of California, Berkeley}

\address{J. Terhorst\\
Department of Statistics\\
University of California, Berkeley\\
Berkeley, CA 94720\\
USA\\
\printead{e1}}

\address{Y.~S. Song\\
Department of Statistics and\\
\hspace{3mm} Computer Science Division\\
University of California, Berkeley\\
Berkeley, CA 94720\\
USA\\
\printead{e2}}  
\end{aug}

\begin{abstract}
The sample frequency spectrum (SFS) of DNA sequences from a collection of individuals is a summary statistic  which is
commonly used for parametric inference in population genetics.
Despite the popularity of SFS-based inference methods, currently little is known about the information-theoretic limit on the estimation accuracy as a function of sample size.
Here, we show that using the SFS to estimate the size history of a population has a
minimax error of at least $O(1/\log s)$, where $s$ is the number of
independent segregating sites used in the analysis. 
This rate is exponentially worse than known convergence rates for many classical estimation problems in statistics.
Another surprising aspect of our theoretical bound is that it does not depend on the dimension of the SFS, which is related to the number of sampled individuals.
This means that, for a fixed number $s$ of segregating sites considered, using more individuals does not help to reduce the minimax error bound.
Our result pertains to populations that have experienced a bottleneck, 
and we argue that it can be expected to apply to many populations in nature. 
\end{abstract}

\end{frontmatter}
\section{Introduction}

The past decade has seen a revolution in our ability to interrogate
the genome at the molecular level.  Fueled by technological advances
in DNA sequencing, studies now routinely query thousands or tens of thousands of 
individuals \cite{gpc2010map,nelson2012abundance,tennessen2012evolution,uk10k,exac} in order to better understand disease
susceptibility, heritability, population history, and other phenomena.
In most cases, the conclusions of these studies come in the form of
statistical estimates obtained from models that relate the effect
of interest to mutation patterns arising in sampled DNA sequences. As genetic
sample sizes explode, it is natural to wonder how additional data
improve the quality of these estimates. While this general question has
received intense focus in theoretical statistics, certain aspects of the
genetics setting (for example, non-Gaussianity and lack of independence
among samples) complicate efforts to study such models using classical
techniques. New methods are needed to theoretically characterize 
some common models in statistical genetics.

Here, we address this need for a specific estimation problem in
population genetics known as \emph{demographic inference}. As we explain
in further detail below, the aim of this problem is to reconstruct the
sequence of historical events---including population size changes, migration,
and admixture---that gave rise to present-day populations, using
DNA samples obtained from those populations.  We focus on the
simplest problem of estimating the size history of a single population
backwards in time. 

A summary statistic known as the \emph{sample frequency spectrum} (SFS; defined below)
is often employed in empirical studies \cite{nielsen2000estimation,
gutenkunst2009inferring,coventry2010deep,gazave2014neutral,gravel2011demographic, 
nelson2012abundance,excoffier2013robust,bhaskar2015efficient}, but
there have been fewer attempts to understand SFS-based estimation from a
theoretical perspective.  The main result of this paper is to show that, for
a common class of estimators which analyze the SFS, there is a
fundamental limit on their accuracy as a function of the sample size.
More precisely, we show that, under a standard statistical error metric
known as \emph{minimax error}, the rate at which these estimators
converge to the truth for certain populations is at best inversely
logarithmic in the number of independent segregating sites analyzed, and
does not depend at all on the number of individuals sampled.
Compared to other types of statistical estimation problems (for example,
linear regression), this is an extremely slow rate of convergence. Our
proof is information-theoretic in nature and applies to \emph{any}
estimator that operates solely on the SFS. This
is the first result we are aware of that characterizes the convergence
rate of demographic history estimates as a function of sample size.

The remainder of this paper is organized as follows. In
\sref{sec:prelim} we formally define our notation and model.
In \sref{sec:main} we state our main theoretical results, followed by
a discussion of their practical implications in \sref{sec:discuss}. To
streamline our exposition, all mathematical proofs are deferred until
\sref{sec:proofs}.

\section{\label{sec:prelim}Preliminaries}

The stochastic process underlying the inference
procedure we consider is Kingman's coalescent
\cite{kingman1982coalescent,kingman1982genealogy,kingman1982exchangeability}, 
which evolves backward in time and describes the genealogy of a
collection of chromosomes randomly sampled from a population. The
population size is assumed to change deterministically over time and is
described by a function $\eta:[0,\infty)\to(0,\infty)$, with $\eta(t)$
being the population size at time $t$ in the past. The instantaneous
rate of coalescence between any pair of lineages at time $t$ is
$1/\eta(t)$.

As in the standard infinite-sites model of mutation
\cite{kimura1969number}, we assume that every dimorphic site (i.e., a
site with exactly two observed allelic types) has experienced mutation
exactly once in the evolutionary history relating the sample. Further,
for each such site, we assume that it is known which allele is the
ancestral type versus the mutant type. In what follows, we use the terms
dimorphic and segregating interchangeably.

A population size function $\eta(t)$ induces a probability distribution
on the number of derived alleles found at a particular segregating
site.  Specifically, for a sample of $n \geq 2$ randomly sampled individuals, 
let $\sfs_{n,b}^{(\eta)}$, for $1\leq b\leq n-1$, denote the probability
that a segregating site contains $b$ mutant alleles in a sample of
$n$ individuals under model $\eta$.  
The vector $\bfsfs_n^{(\eta)} \eqdef (\sfs_{n,1}^{(\eta)},\ldots,\sfs_{n,n-1}^{(\eta)})$ 
is called the expected SFS.
In the coalescent setting, a general expression for $\sfs_{n,b}^{(\eta)}$ 
is given by \cite{griffiths1998age}
\[
\sfs_{n,b}^{(\eta)}\propto\sum_{k=2}^{n-b+1}\frac{\binom{n-b-1}{k-2}}{{n-1\choose k-1}} \cdot k\cdot\mathbb{E}T_{n,k}^{(\eta)},
\]
where $\mathbb{E}T_{n,k}^{(\eta)}$ denotes the amount of time (in
coalescent units) during which the genealogy of the sample contained
$k$ lineages under model $\eta$.  
The expected waiting time $\mathbb{E}T_{m,m}^{(\eta)}$ to the first coalescence 
in a sample of $m$ individuals is given by
\begin{equation}
\fc^{(\eta)}_m \eqdef \mathbb{E}T_{m,m}^{(\eta)} =\int_{0}^{\infty}t\frac{a_{m}}{\eta(t)}\exp\left\{ -a_{m}R_{\eta}(t)\right\} \,{\rm d}t,
\label{eq:ETmm}
\end{equation}
where $a_{m}\eqdef\binom{m}{2}$ and $R_{\eta}(t)\eqdef\int_{0}^{t}\frac{1}{\eta(s)}{\rm d}s$ is
the cumulative rate of coalescence up to time $t$. 
It turns out \cite{polanski2003note} that there is an invertible linear transformation which relates $(\bbE T_{n,2}^{(\eta)},\bbE T_{n,3}^{(\eta)},\ldots,\bbE T_{n,n}^{(\eta)})$ to $\bffc^{(\eta)} \eqdef (\fc^{(\eta)}_2,\fc^{(\eta)}_3,\ldots,\fc^{(\eta)}_n)$.
Using this relation, the quantity $\sfs_{n,b}^{(\eta)}$ can be written as \cite{polanski2003new}
\begin{equation}
	\sfs_{n,b}^{(\eta)}=\frac{\left\langle \mathbf{\fc}^{(\eta)},\mathbf{W}_{n,b}\right\rangle }{\left\langle \mathbf{\fc}^{(\eta)},\mathbf{V}_{n}\right\rangle },
\label{eq:qnb_polanski}
\end{equation}
where  $\bfW_{n,b} = (W_{n,b,2},\ldots, W_{n,b,n})$  and $\bfV_n=(V_{n,2},\ldots,V_{n,n})$ are vectors of universal constants that do not depend on the population size function $\eta$, and $\ip{\cdot,\cdot}$ denotes the $l_2$-inner product.
Under model $\eta$, the quantity $\ip{\bffc^{(\eta)},\bfW_{n,b}}$ is the
total expected length of edges subtending $b$ out of $n$ individuals
sampled at time $0$, while the quantity $\ip{\bffc^{(\eta)},\bfV_n}$ is
the total expected tree length for a sample of size $n$. Both quantities
are positive for all population size functions $\eta$. For an arbitrary
population size function $\eta$, we have 
$\sum_{b=1}^{n-1} W_{n,b,m} = V_{n,m}$ for all $2 \leq m \leq n$, which 
implies
\begin{equation}
\sum_{b=1}^{n-1}\left\langle \mathbf{\fc}^{(\eta)},\mathbf{W}_{n,b}\right\rangle =\left\langle \mathbf{\fc}^{(\eta)},\mathbf{V}_{n}\right\rangle. \label{eq:Wnb-Vn equivalnce}
\end{equation}
For a constant function $\eta(t) \equiv N$, 
\begin{align}
\fc^{(\eta)}_m &= \frac{N}{a_m}, \nonumber \\ 
\ip{\bffc^{(\eta)},\bfW_{n,b}} &= \frac{2}{b} N, \label{eq:ip_constantPop}\\
\ip{\bffc^{(\eta)},\bfV_n} &= 2 N H_{n-1}, \label{eq:c.V_const}
\end{align}
where $H_{n-1} \eqdef \sum_{b=1}^{n-1} \frac{1}{b}$.

To formulate the problem, we employ the following notation. We suppose
that a sample of $n \ge 2$ randomly sampled individuals has been typed at
$s$ independent segregating sites. These data are used to form the
\emph{empirical} sample frequency spectrum, which is an $(n-1)$-tuple
$(\sfshat_{n,1},\ldots,\sfshat_{n,n-1})$, where $\sfshat_{n,b}$ denotes
the proportion of segregating sites with $b$ copies of the mutant allele
and $n-b$ copies of the ancestral allele. A \emph{frequency-based
estimator} is any statistic $\hat{\eta}$ which maps an empirical SFS to
a population size history.

\section{\label{sec:main}Main Results}
Here, we establish a minimax lower bound on the ability of \emph{any}
estimator $\hat{\eta}$ to accurately reconstruct population size
functions.

\bigskip
\subsection{A general bound on the KL divergence between two SFS distributions}
Abusing notation, we use $D(\eta\,\|\,\eta')$ to denote the
 Kullback-Leibler (KL) divergence between the probability distributions $\bfsfs_n^{(\eta)}$
and $\bfsfs_n^{(\eta')}$.  
In \sref{sec:proofs}, we prove the following general upper bound on the KL divergence between two SFS distributions:

\begin{theorem}
\label{thm:kl-bound-general}
Let $\cM$ denote a general space of population size functions and suppose $\eta,\eta'\in\cM$ satisfy  $\eta(t) = \eta'(t)$ for all $0 \leq t \leq t_c$ and 
$\max_{t>t_c}\eta(t)\le\min_{t> t_c}\eta'(t).$
Then, 
\begin{align}
	D(\eta \Vert\eta') & \le\frac{\left\langle \mathbf{\fc}^{(\eta')}-\mathbf{\fc}^{(\eta)},\mathbf{V}_n\right\rangle }{\left\langle \mathbf{\fc}^{(\eta)},\mathbf{V}_n\right\rangle }.
	 \label{eq:D_KL(1)}
\end{align}
\end{theorem}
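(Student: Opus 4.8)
The plan is to reduce \eqref{eq:D_KL(1)} to one monotonicity statement and then finish with two elementary inequalities. Abbreviate $A\eqdef\ip{\bffc^{(\eta)},\bfV_n}$, $B\eqdef\ip{\bffc^{(\eta')},\bfV_n}$, $A_b\eqdef\ip{\bffc^{(\eta)},\bfW_{n,b}}$ and $B_b\eqdef\ip{\bffc^{(\eta')},\bfW_{n,b}}$, so that \eqref{eq:qnb_polanski} gives $\sfs_{n,b}^{(\eta)}=A_b/A$ and $\sfs_{n,b}^{(\eta')}=B_b/B$. Writing out the KL divergence and separating off a $\log(B/A)$ term using $\sum_b A_b=A$, which is exactly \eqref{eq:Wnb-Vn equivalnce},
\[
	D(\eta\Vert\eta')\;=\;\sum_{b=1}^{n-1}\frac{A_b}{A}\log\frac{A_b B}{A B_b}\;=\;\log\frac{B}{A}\;+\;\sum_{b=1}^{n-1}\frac{A_b}{A}\log\frac{A_b}{B_b}.
\]
I would then prove the componentwise bound $A_b\le B_b$ for all $1\le b\le n-1$. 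Granting it, every term in the last sum is $\le 0$, so $D(\eta\Vert\eta')\le\log(B/A)\le (B-A)/A$ by $\log x\le x-1$, and $(B-A)/A=\ip{\bffc^{(\eta')}-\bffc^{(\eta)},\bfV_n}/\ip{\bffc^{(\eta)},\bfV_n}$ is precisely the right-hand side of \eqref{eq:D_KL(1)}. (The same inequality forces $A\le B$, consistent with $D\ge 0$.)

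Proving $A_b\le B_b$ is the main obstacle. I would first reduce it to the componentwise monotonicity $\mathbb{E}T_{n,k}^{(\eta)}\le\mathbb{E}T_{n,k}^{(\eta')}$ for $2\le k\le n$: combining the expression for $\sfs_{n,b}^{(\eta)}$ in terms of the $\mathbb{E}T_{n,k}^{(\eta)}$ quoted above with \eqref{eq:qnb_polanski} and the identity $\ip{\bffc^{(\eta)},\bfV_n}=\sum_{k=2}^n k\,\mathbb{E}T_{n,k}^{(\eta)}$ (total expected tree length) yields
\[
	A_b\;=\;\sum_{k=2}^{n-b+1}\frac{\binom{n-b-1}{k-2}}{\binom{n-1}{k-1}}\,k\,\mathbb{E}T_{n,k}^{(\eta)},
\]
a linear combination of the $\mathbb{E}T_{n,k}^{(\eta)}$ whose coefficients are nonnegative and independent of $\eta$, so it is enough to control each $\mathbb{E}T_{n,k}^{(\eta)}$ separately.

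For the monotonicity of $\mathbb{E}T_{n,k}^{(\eta)}$ I would couple the two variable-size coalescents on a common probability space via the standard time-change construction: run one Kingman coalescent in coalescent-time $u$, let $[\sigma_k,\sigma_{k-1})$ be the random interval during which it carries $k$ ancestral lineages, and obtain the model-$\eta$ genealogy through $u=R_\eta(t)$, so that the real time spent with $k$ lineages is
\[
	T_{n,k}^{(\eta)}\;=\;R_\eta^{-1}(\sigma_{k-1})-R_\eta^{-1}(\sigma_k)\;=\;\int_{\sigma_k}^{\sigma_{k-1}}\eta\bigl(R_\eta^{-1}(u)\bigr)\,\mathrm{d}u,
\]
using $(R_\eta^{-1})'(u)=\eta(R_\eta^{-1}(u))$; the analogous identity holds for $\eta'$ with the \emph{same} $\sigma_k,\sigma_{k-1}$. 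Since $\eta\equiv\eta'$ on $[0,t_c]$ we get $R_\eta\equiv R_{\eta'}$ on $[0,t_c]$, hence the two integrands coincide for $u\le R_\eta(t_c)$; for $u>R_\eta(t_c)$ both $R_\eta^{-1}(u)$ and $R_{\eta'}^{-1}(u)$ exceed $t_c$, and then the hypothesis $\max_{t>t_c}\eta(t)\le\min_{t>t_c}\eta'(t)$ gives $\eta(R_\eta^{-1}(u))\le\eta'(R_{\eta'}^{-1}(u))$. Thus $\eta\circ R_\eta^{-1}\le\eta'\circ R_{\eta'}^{-1}$ a.e., so $T_{n,k}^{(\eta)}\le T_{n,k}^{(\eta')}$ pointwise on the coupled space, and taking expectations yields $\mathbb{E}T_{n,k}^{(\eta)}\le\mathbb{E}T_{n,k}^{(\eta')}$ and hence $A_b\le B_b$.

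The points that would need care are: (i) the nonnegative-combination formula for $A_b$, which relies on the independence in Kingman's coalescent of the ranked topology from the coalescence times, so that the expected length of edges subtending $b$ leaves during the $k$-lineage phase factors as $k\binom{n-b-1}{k-2}/\binom{n-1}{k-1}$ times $\mathbb{E}T_{n,k}^{(\eta)}$; (ii) invertibility and integrability of $R_\eta$, an issue only when $\int_0^\infty\mathrm{d}s/\eta(s)<\infty$ and affecting only a null set of $u$; and (iii) well-definedness of the logarithms, which holds since each $B_b>0$, while the bound is vacuous when its right-hand side is infinite. An alternative that avoids (i) is to couple the two genealogies at the level of the \emph{entire} tree via the shared standard coalescent: the topology and the set of leaves subtended by each edge are then common to both models, only the real-time edge lengths differ, and the same pointwise inequality $\eta\circ R_\eta^{-1}\le\eta'\circ R_{\eta'}^{-1}$ shows each edge is no shorter under $\eta'$, giving $A_b\le B_b$ (and $A\le B$) directly.
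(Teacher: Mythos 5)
Your proposal is correct and follows essentially the same route as the paper: the identical decomposition of $D(\eta\Vert\eta')$ into the per-frequency term $\log\bigl(\ip{\bffc,\bfW_{n,b}}/\ip{\bffc',\bfW_{n,b}}\bigr)$ plus the common term $\log\bigl(\ip{\bffc',\bfV_n}/\ip{\bffc,\bfV_n}\bigr)$, the same key componentwise inequality $\ip{\bffc,\bfW_{n,b}}\le\ip{\bffc',\bfW_{n,b}}$ from the rate comparison, and the same final use of $\log x\le x-1$. The only difference is that you make the key inequality rigorous via the Griffiths--Tavar\'e representation and a time-change coupling showing $\mathbb{E}T_{n,k}^{(\eta)}\le\mathbb{E}T_{n,k}^{(\eta')}$, a step the paper asserts directly from the instantaneous coalescence rates.
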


\bigskip
\subsection{Bounds for a family of piecewise-constant models}
We now focus on a particular class of population size functions
which are easier to analyze and popular in the literature
\cite{bhaskar2014descartes,li2011inference,bhaskar2015efficient}. For a
fixed positive integer $K > 1$, let $\cM_K\subset \cM$ denote
the space of piecewise-constant size functions with exactly $K$ pieces.
A population size function $\eta$ is a member of $\cM_K$ if
and only if there exist positive real numbers $t_{1}<\cdots<t_{K-1}$ and
$N_{1},N_{2},\dots,N_{K}$ such that
\begin{equation}
	\eta(t)=\sum_{k=1}^{K}N_{k}\mathbf{1}\{t_{k-1}\le t<t_{k}\}, \label{eq:piecewise_etat}
\end{equation}
where by convention we define $t_{0}=0$ and $t_{K}=\infty$.
For such an $\eta$, define 
\begin{equation} 
S_{k}^{(\eta)}\eqdef\sum_{j=1}^{k}\frac{t_{j}-t_{j-1}}{N_{j}}.\label{eq:S_k}
\end{equation}
For $\eta \in \cM_K$, the expected waiting time $\fc^{(\eta)}_m$  defined in \eqref{eq:ETmm} is given by
\begin{equation}
	\fc^{(\eta)}_m =\frac{1}{a_{m}}\sum_{k=1}^{K}N_{k}(e^{-a_{m}S_{k-1}^{(\eta)}}-e^{-a_{m}S_{k}^{(\eta)}}).  \label{eq:ETmm_eta} 
\end{equation}
Note that since $t_{K}=\infty$,
\begin{equation}
e^{-a_{m}S_{K}^{(\eta)}}\equiv0,\quad\text{for all \ensuremath{\eta \in \cM_K}.}\label{eq:e_Sk=00003D0}
\end{equation}

To formulate our result, we let $I,J$ denote positive
integers that satisfy $\II+J=K$, and introduce a subfamily
$\subfamilyIJ\subset\cM_K$ of piecewise-constant functions
defined as follows. See \fref{fig:U_K} for illustration. We assume that
all change points $t_1 < \cdots < t_{I+J-1}$ are fixed and that the
sizes $N_1,\ldots, N_{\II}$ of the first $I$ epochs are also fixed, with
$N_{\II}$ being the smallest size. So, all functions in $\subfamilyIJ$
are identical to each other for the first $I$ epochs, and there is a
population bottleneck in the last epoch. Then, for $t \geq t_\II$, every
function $\eta\in\subfamilyIJ$ undergoes jumps according to the following
rules:
\begin{enumerate}
	\item For the interval $t_I \leq t < t_{I+1}$, $\eta(t)$ takes a constant value of either $h$ or $h+\delta$, where $h > N_{\II}$   and $\delta > 0$.
	\item At later change points $\{t_{I+1},\ldots,t_{I+J-1}\}$, $\eta$ either stays the same or jumps upward by $\delta$.
\end{enumerate}
Hence, $\subfamilyIJ$ consists of $2^J$ distinct piecewise-constant
functions that are non-decreasing functions of $t$ for $t \geq
t_I$. Note that $\min_{t} \eta(t) = N_{\II}$ for all $\eta \in
\subfamilyIJ$. For ease of notation, we use $\vep\eqdef N_{\II}$ to
denote the bottleneck size and $\tb\eqdef t_{\II} - t_{\II-1}$ to
denote the bottleneck duration. To facilitate analysis later, we
fix $t_{\II+j}-t_{\II+j-1}$ to some positive constant $\ta$ for
all $j=1,\ldots, J-1$.

\begin{figure}[t]
\centering 
\includegraphics[width=\textwidth]{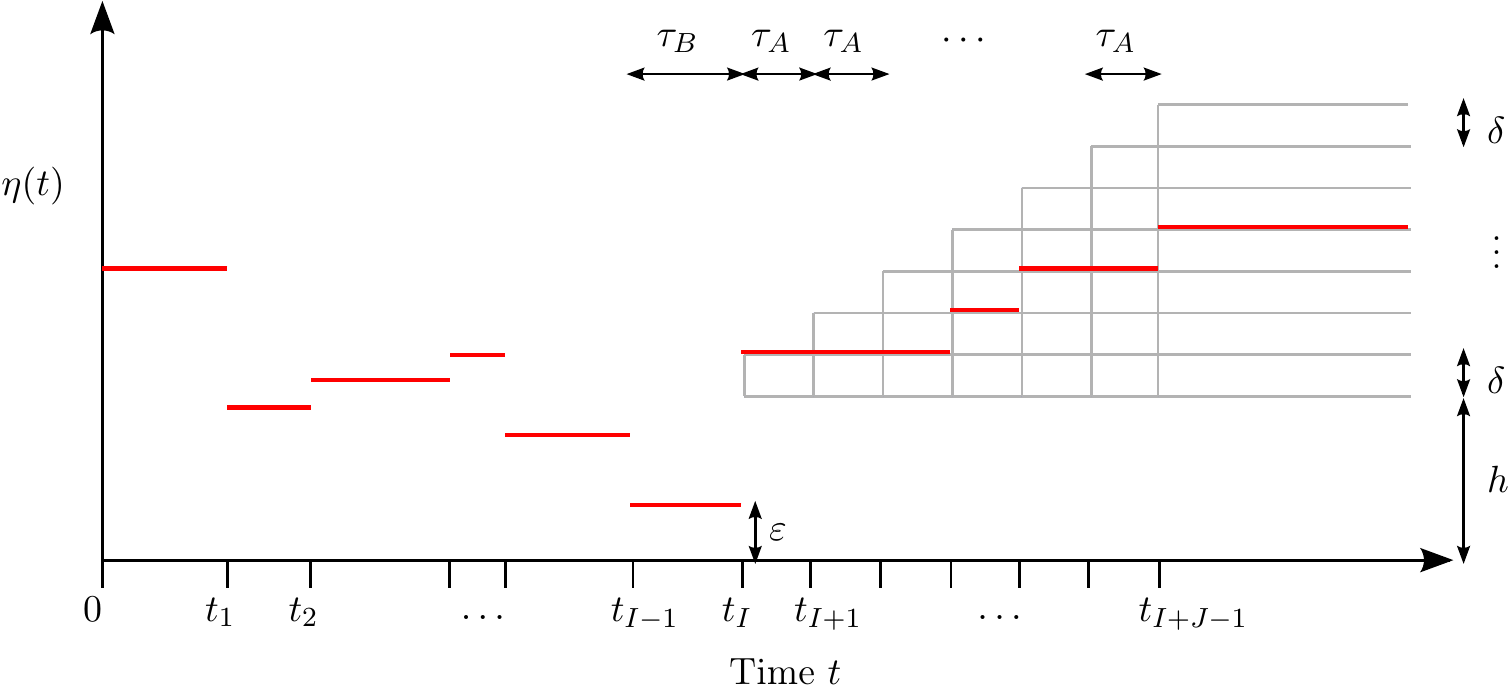}
\caption{A family $\subfamilyIJ$ of piecewise-constant population size models with $K = I+J$ epochs.} 
\label{fig:U_K}
\end{figure}

For any two models in $\subfamilyIJ$, we obtain the following bound on the difference of their waiting times to the first coalescence:
\begin{lemma}
\label{lem:e_bound}
For all $\eta,\eta'\in\subfamilyIJ$,
\begin{equation}
\big|\fc_{m}^{(\eta)}-\fc_{m}^{(\eta')}\big| \leq J \frac{\delta}{a_m} e^{-a_m \tb/\vep}. \label{eq:e1-e2 bound}
\end{equation}
\end{lemma}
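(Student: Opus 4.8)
The plan is to apply summation by parts to \eqref{eq:ETmm_eta} so that the large, $h$‑dependent pieces of $\fc_m^{(\eta)}$ become common to every member of $\subfamilyIJ$ and hence cancel in the difference. Writing $N_k^{(\eta)}$ for the size of the $k$‑th epoch of $\eta$, and using $S_0^{(\eta)}=0$ together with $e^{-a_m S_K^{(\eta)}}=0$ from \eqref{eq:e_Sk=00003D0}, Abel summation turns \eqref{eq:ETmm_eta} into a sum over the \emph{jumps} of $\eta$ rather than its levels:
\[
a_m\,\fc_m^{(\eta)} = N_1 + \sum_{k=1}^{K-1}\bigl(N_{k+1}^{(\eta)} - N_k^{(\eta)}\bigr)\,e^{-a_m S_k^{(\eta)}}.
\]
Since all $\eta\in\subfamilyIJ$ share the same first $I$ epochs, the quantities $N_k^{(\eta)}$ and $S_k^{(\eta)}$ for $k\le I$ are common to the whole family; in particular the terms $k=1,\dots,I-1$ are identical across $\subfamilyIJ$ and drop out of any difference, and $S_I^{(\eta)}$ does not depend on $\eta$, so I may write it simply as $S_I$.

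Next I would substitute the explicit parametrization of $\subfamilyIJ$. Encode the last $J$ epoch sizes as $N_{I+j}^{(\eta)} = h + c_j^{(\eta)}\delta$ for $j=1,\dots,J$, where the integer vector $c^{(\eta)}=(c_1^{(\eta)},\dots,c_J^{(\eta)})$ satisfies $c_1^{(\eta)}\in\{0,1\}$ and $c_{j+1}^{(\eta)}-c_j^{(\eta)}\in\{0,1\}$ (this is exactly rules 1 and 2 defining the $2^J$ models). Then the $k=I$ jump is $N_{I+1}^{(\eta)}-N_I^{(\eta)}=(h-\vep)+c_1^{(\eta)}\delta$, whose $h-\vep$ part is again common to the family and cancels, while for $k=I+j$ with $1\le j\le J-1$ the jump is exactly $(c_{j+1}^{(\eta)}-c_j^{(\eta)})\delta\in\{0,\delta\}$. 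Collecting the surviving terms gives
\[
a_m\bigl(\fc_m^{(\eta)}-\fc_m^{(\eta')}\bigr) = \delta\bigl(c_1^{(\eta)}-c_1^{(\eta')}\bigr)e^{-a_m S_I} + \delta\sum_{j=1}^{J-1}\Bigl[\bigl(c_{j+1}^{(\eta)}-c_j^{(\eta)}\bigr)e^{-a_m S_{I+j}^{(\eta)}} - \bigl(c_{j+1}^{(\eta')}-c_j^{(\eta')}\bigr)e^{-a_m S_{I+j}^{(\eta')}}\Bigr].
\]

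It then remains to bound this termwise. From \eqref{eq:S_k} the sequence $S_k^{(\eta)}$ is nondecreasing in $k$, and the $I$‑th epoch contributes exactly $\tb/\vep$ to $S_I$ with all earlier contributions nonnegative, so $S_k^{(\eta)}\ge S_I\ge \tb/\vep$ for every $k\ge I$; hence $0<e^{-a_m S_{I+j}^{(\eta)}}\le e^{-a_m\tb/\vep}$ uniformly in $j\ge 0$ and in $\eta\in\subfamilyIJ$. The first term above has absolute value at most $\delta\,e^{-a_m\tb/\vep}$ because $|c_1^{(\eta)}-c_1^{(\eta')}|\le 1$; and for each $j$ in the sum, monotonicity of $c^{(\eta)}$ forces $c_{j+1}^{(\eta)}-c_j^{(\eta)}\in\{0,1\}$, so both $\bigl(c_{j+1}^{(\eta)}-c_j^{(\eta)}\bigr)e^{-a_m S_{I+j}^{(\eta)}}$ and its primed counterpart lie in $[0,e^{-a_m\tb/\vep}]$, whence their difference is at most $e^{-a_m\tb/\vep}$ in absolute value. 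Summing the $1+(J-1)=J$ surviving terms and dividing by $a_m$ yields \eqref{eq:e1-e2 bound}. The one genuinely non‑routine step is the first one: estimating \eqref{eq:ETmm_eta} directly keeps the factors $N_k^{(\eta)}=h+c_j^{(\eta)}\delta$ intact and produces a bound that grows with $h$, so the summation‑by‑parts reorganization into jumps — after which every surviving coefficient is $0$ or $\delta$ and the $h$‑dependence has disappeared — is what makes an $h$‑free bound possible; everything after that is the triangle inequality plus the elementary estimate $S_I\ge\tb/\vep$, with the only bookkeeping subtlety being the correct indexing of the $2^J$ models by the increment vector $c^{(\eta)}$.
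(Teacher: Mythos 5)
Your proof is correct, and it takes a genuinely different route from the paper's. The paper first sandwiches every member of $\subfamilyIJ$ between two extremal functions $\etalower$ (constant at $h$ after $t_I$) and $\etaupper$ (jumping by $\delta$ at every change point), invokes monotonicity of $\fc_m$ under pointwise ordering of size functions to reduce the problem to the single difference $\fc_m^{(\etaupper)}-\fc_m^{(\etalower)}$, and then computes that difference explicitly from \eqref{eq:ETmm_eta}, where the coefficients $j\delta$ telescope to give $\delta\sum_{j=1}^{J}e^{-a_m S_{I+j-1}^{(\etaupper)}}\le J\delta e^{-a_m\tb/\vep}$. You instead apply summation by parts globally, rewrite $a_m\fc_m^{(\eta)}$ as a sum over jumps, note that the $h$-dependent and pre-bottleneck terms are common to the whole family and cancel, and bound the difference for an \emph{arbitrary} pair termwise using $S_{I+j}^{(\eta)}\ge S_I\ge\tb/\vep$ and the fact that each surviving coefficient is $0$ or $\delta$; your count of $1+(J-1)=J$ surviving $\eta$-dependent terms is right. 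The telescoping/Abel step is the shared algebraic core, but your argument needs neither the envelope construction nor the coupling fact that a pointwise-larger $\eta$ yields a larger $\fc_m$; in particular it sidesteps the paper's passing claim that any two members of $\subfamilyIJ$ are pointwise ordered (not literally true, e.g. jump patterns $(1,0,0)$ versus $(0,1,1)$, though the paper's bound is unaffected since both functions are sandwiched by the envelopes regardless). What the paper's route buys is brevity once monotonicity is granted; what yours buys is a self-contained, purely algebraic derivation from \eqref{eq:ETmm_eta} that delivers the absolute-value bound for arbitrary pairs directly.
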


\noindent
Then, together with \thmref{thm:kl-bound-general}, this lemma can be used to show
\begin{theorem}
\label{thm:kl-bound}Let $\eta, \eta' \in  \subfamilyIJ$ that satisfy $\max_{t\geq \tII}\eta(t)\le\min_{t\geq \tII}\eta'(t).$  Then,
\begin{equation}
D(\eta\,\|\,\eta')\leq J \frac{\delta}{\vep} e^{-\tb/\vep}.
\label{eq:kl-bound}
\end{equation}
 \end{theorem}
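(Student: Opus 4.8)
The plan is to feed \lemref{lem:e_bound} into \thmref{thm:kl-bound-general}. First I would check that the hypotheses of \thmref{thm:kl-bound-general} hold with $t_c=\tII$: by construction every function in $\subfamilyIJ$ agrees with the fixed profile $(N_1,\dots,N_\II)$ on $[0,\tII)$, so $\eta$ and $\eta'$ have the same restriction up to $\tII$, and the hypothesis of \thmref{thm:kl-bound} supplies exactly the condition $\max_{t>t_c}\eta\le\min_{t>t_c}\eta'$. Hence
\[
  D(\eta\,\|\,\eta')\ \le\ \frac{\ip{\bffc^{(\eta')}-\bffc^{(\eta)},\bfV_n}}{\ip{\bffc^{(\eta)},\bfV_n}} .
\]

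The core of the argument is to promote \lemref{lem:e_bound} to a \emph{multiplicative, coordinatewise} comparison of the two vectors, namely $\fc_m^{(\eta')}\le\bigl(1+J\tfrac{\delta}{\vep}e^{-\tb/\vep}\bigr)\fc_m^{(\eta)}$ for every $2\le m\le n$. Two ingredients give this. (i) Since $a_m=\binom{m}{2}\ge 1$ for $m\ge2$, \lemref{lem:e_bound} yields $\fc_m^{(\eta')}-\fc_m^{(\eta)}\le J\tfrac{\delta}{a_m}e^{-a_m\tb/\vep}\le J\tfrac{\delta}{a_m}e^{-\tb/\vep}$. (ii) Integrating \eqref{eq:ETmm} by parts---the boundary term vanishes because every $\eta\in\subfamilyIJ$ is bounded---gives $\fc_m^{(\eta)}=\int_0^\infty e^{-a_mR_\eta(t)}\,\mathrm{d}t$, and because $\eta\ge\vep$ pointwise we have $R_\eta(t)\le t/\vep$, hence $\fc_m^{(\eta)}\ge\vep/a_m$. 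Combining (i) and (ii), $J\tfrac{\delta}{a_m}e^{-\tb/\vep}=\bigl(J\tfrac{\delta}{\vep}e^{-\tb/\vep}\bigr)\tfrac{\vep}{a_m}\le\bigl(J\tfrac{\delta}{\vep}e^{-\tb/\vep}\bigr)\fc_m^{(\eta)}$, which is the claimed coordinatewise bound.

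To conclude I would use that the coefficients $V_{n,m}$ are nonnegative---a known property of the Polanski--Kimmel vectors, consistent with the positivity of $\ip{\bffc^{(\eta)},\bfV_n}$ noted in \sref{sec:prelim}---so that the coordinatewise inequality is preserved under the inner product with $\bfV_n$: $\ip{\bffc^{(\eta')},\bfV_n}\le\bigl(1+J\tfrac{\delta}{\vep}e^{-\tb/\vep}\bigr)\ip{\bffc^{(\eta)},\bfV_n}$. Dividing by the positive quantity $\ip{\bffc^{(\eta)},\bfV_n}$ in the displayed bound then gives $D(\eta\,\|\,\eta')\le J\tfrac{\delta}{\vep}e^{-\tb/\vep}$. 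The step I expect to be the real crux is ingredient (ii): a naive denominator bound such as $\ip{\bffc^{(\eta)},\bfV_n}\ge 2\vep$ is far too lossy and would leave a spurious factor of $H_{n-1}$ in the final rate, so it is essential to exploit the full coordinatewise lower bound $\fc_m^{(\eta)}\ge\vep/a_m$ together with $\bfV_n\ge 0$; this is precisely what makes the bound independent of the sample size $n$. Everything else is routine once \thmref{thm:kl-bound-general} and \lemref{lem:e_bound} are in hand.
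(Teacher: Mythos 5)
Your proposal is correct and follows essentially the same route as the paper: \lemref{lem:e_bound} fed into \thmref{thm:kl-bound-general}, with the denominator controlled by comparison against a constant-$\vep$ population. The only difference is bookkeeping --- the paper aggregates to expected total tree lengths, producing a factor $2H_{n-1}$ in numerator and denominator that cancels, whereas you make the comparison coordinatewise via $\fc_m^{(\eta)}\ge \vep/a_m$; both versions rely on $V_{n,m}\ge 0$ (yours explicitly, the paper's implicitly when passing the coordinatewise bounds through the inner product with $\bfV_n$).
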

\noindent
Our proofs of the above results are deferred to \sref{sec:proofs}. It is
interesting that the above bound does not depend on the number $n$ of
sampled individuals.

\bigskip
\subsection{Minimax lower bounds}
Before using the above results to obtain a minimax lower bound, we
first note a subtle fact. Given any population size function $\eta$,
consider a function $\zeta$ which satisfies $\zeta(t) = \kappa\cdot
\eta(t/\kappa)$ for all $t \in [0,\infty)$, where $\kappa$ is some
positive constant. Such functions are equivalent, as it turns out that
$\sfs_{n,b}^{(\zeta)} = \sfs_{n,b}^{(\eta)}$ for all $n \geq 2$ and $1
\leq b \leq n-1$. To mod out by this equivalence, we assume that every
$\eta\in\cM$ satisfies $\eta(0)=\Nfix$, where $\Nfix$ is some fixed
positive constant.

Let $\norm{\cdot}{*}$ denote a generic norm (specific examples will
be given later) and let $\mathbb{E}_{\eta}(\cdot)$
denote expectation with respect to the SFS distribution
$\bfsfs^{(\eta)}_n=(\sfs^{(\eta)}_{n,1},\ldots,\sfs^{(\eta)}_{n,n-1})$
induced by population size function $\eta$. Then, note that
\[
\inf_{\hat\eta} \sup_{\eta\in\cM}\mathbb{E}_{\eta} \norm{\hat{\eta}-\eta}{*}  \geq 
\inf_{\hat\eta} \sup_{\eta\in\cM_K}\mathbb{E}_{\eta} \norm{\hat{\eta}-\eta}{*} \geq
\inf_{\hat\eta} \sup_{\eta\in\subfamilyIJ}\mathbb{E}_{\eta} \norm{\hat{\eta}-\eta}{*}.
\]
In what follows, we will put a lower bound on the last
quantity. We first fix a sensible distance metric on
$\cM$. An intuitive way to measure distance between two
population size functions is their $L_{1}$ distance,
$\|\eta_{a}-\eta_{b}\|_{1}=\int_{0}^{\infty}|\eta_{a}(t)-\eta_{b}(t)| \mathrm{d}t$, 
but this is unreasonably stringent in that 
$\left\Vert \eta_{a}-\eta_{b}\right\Vert _{1}=\infty$ 
if $\eta_{a}$ and $\eta_{b}$ do not agree infinitely far back into the
past. Instead we will focus on the following truncated $L_{1}$ distance:
$\|\eta_{a}-\eta_{b}\|_{1,T}\eqdef\int_{0}^{T}|\eta_{a}(t)-\eta_{b}(t)|\mathrm{d}t$, 
which measures the discrepancy between $\eta_{a}$ and
$\eta_{b}$ back to some fixed time $T$ in the past.

Henceforth, let $\hat{\eta}$ be any
estimator of the population size function which operates on a sample of
$s$ independent segregating sites obtained from a sample of $n$ randomly sampled
individuals. 
In \sref{sec:proofs}, we prove the following main results of our paper:

\begin{theorem}
\label{thm:nonasymptotic}
Consider the subfamily $\subfamilyIJ$ of models described above, and
suppose $J > 8$ and $T \geq t_{\II+J-1}+\ta$. 
Then, 
\begin{equation}
\inf_{\hat\eta} \sup_{\eta\in\subfamilyIJ}\mathbb{E}_{\eta} \norm{\hat{\eta}-\eta}{1,T} \geq 
C \ta \frac{(J - 8)^2}{J} \frac{\vep}{s}e^{\tb / \vep},
\end{equation}
where $C$ is a positive constant.
\end{theorem}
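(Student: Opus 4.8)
\medskip
\noindent\emph{Proof plan.}\; First I would reduce estimation to a multi‑way hypothesis test over a packing of $\subfamilyIJ$. Parametrize $\subfamilyIJ$ by $\omega=(\omega_1,\dots,\omega_J)\in\{0,1\}^J$, so that $\eta_\omega$ equals $h+\delta\,\sigma_j(\omega)$ on the $(I{+}j)$th epoch, where $\sigma_j(\omega):=\sum_{i\le j}\omega_i$. Since all members of $\subfamilyIJ$ coincide on $[0,t_I)$, each of the epochs $I{+}1,\dots,I{+}J{-}1$ has length $\ta$, and the last epoch meets $[0,T]$ in an interval of length $\ge\ta$ (because $T\ge t_{I+J-1}+\ta$), one has $\norm{\eta_\omega-\eta_{\omega'}}{1,T}\ge\ta\delta\sum_{j=1}^J|\sigma_j(\omega)-\sigma_j(\omega')|$. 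Writing $p_j:=\sum_{i\le j}(\omega_i-\omega_i')$ (so $p_0=0$), each index $i$ with $\omega_i\neq\omega_i'$ contributes $|p_{i-1}|+|p_i|\ge|p_i-p_{i-1}|=1$, and since each $|p_j|$ appears at most twice among those contributions, $\sum_{j=1}^J|p_j|\ge\tfrac12\,d_{\mathrm H}(\omega,\omega')$; hence $\norm{\eta_\omega-\eta_{\omega'}}{1,T}\ge\tfrac12\ta\delta\,d_{\mathrm H}(\omega,\omega')$. I would then fix, via the Gilbert--Varshamov bound, a code $\Omega\subseteq\{0,1\}^J$ with $|\Omega|\ge 2^{J/8}$ and minimum Hamming distance $\ge J/4$ (which exists since $2^{J/8}$ codewords fit with this distance); then $\{\eta_\omega:\omega\in\Omega\}\subseteq\subfamilyIJ$ is a family of $2^{J/8}$ functions that are pairwise $\tfrac18\ta\delta J$‑separated in $\norm{\cdot}{1,T}$.

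The delicate step is controlling the mutual information, and here \thmref{thm:kl-bound} is not directly usable: its hypothesis $\max_{t\ge\tII}\eta\le\min_{t\ge\tII}\eta'$ fails for essentially every pair in $\Omega$. Instead I would compare each $\eta_\omega$ to one reference function $\eta_{\mathrm{ref}}$, equal to the common first $I$ epochs of the family and equal to $h+J\delta$ for all $t\ge t_I$. Then $\eta_\omega\le\eta_{\mathrm{ref}}$ everywhere, $\eta_\omega=\eta_{\mathrm{ref}}$ on $[0,t_I]$, and $\max_{t\ge t_I}\eta_\omega=h+\delta\sigma_J(\omega)\le h+J\delta=\min_{t\ge t_I}\eta_{\mathrm{ref}}$, so \thmref{thm:kl-bound-general} applies with $t_c=t_I$. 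The same Abel‑summation manipulation of \eqref{eq:ETmm_eta} that proves \lemref{lem:e_bound}, together with $S_I^{(\eta_\omega)}\ge\tb/\vep$, gives $0\le\fc^{(\eta_{\mathrm{ref}})}_m-\fc^{(\eta_\omega)}_m\le J\frac{\delta}{a_m}e^{-a_m\tb/\vep}$ for every $m$ and every $\omega$; substituting this into \thmref{thm:kl-bound-general} exactly as \lemref{lem:e_bound} is substituted into \thmref{thm:kl-bound} yields $D(\eta_\omega\,\|\,\eta_{\mathrm{ref}})\le J\tfrac{\delta}{\vep}e^{-\tb/\vep}$ for all $\omega$. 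Since the empirical SFS from $s$ independent sites is a function of $s$ i.i.d.\ draws from $\bfsfs^{(\eta)}_n$, and since $I(W;X)\le\mathbb{E}_W D(P_{X\mid W}\,\|\,Q)$ for any fixed $Q$, taking $W$ uniform on $\Omega$ and $Q=(\bfsfs^{(\eta_{\mathrm{ref}})}_n)^{\otimes s}$ bounds the mutual information between $W$ and the data by $s\,\mathbb{E}_W D(\eta_W\|\eta_{\mathrm{ref}})\le sJ\tfrac{\delta}{\vep}e^{-\tb/\vep}$.

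Finally I would assemble the pieces: the testing reduction together with Fano's inequality, using $\log|\Omega|\ge\tfrac{J}{8}\log 2$ and the half‑separation radius $\tfrac{\ta\delta J}{16}$, gives
\[
\inf_{\hat\eta}\sup_{\eta\in\subfamilyIJ}\mathbb{E}_\eta\norm{\hat\eta-\eta}{1,T}
\;\ge\;\frac{\ta\delta J}{16}\Bigl(1-\frac{\tfrac{sJ\delta}{\vep}e^{-\tb/\vep}+\log 2}{(J/8)\log 2}\Bigr)
\;=\;\frac{\ta\delta J}{16}\Bigl(1-\frac{8}{J}-\frac{8s\delta e^{-\tb/\vep}}{\vep\log 2}\Bigr).
\]
Because $\delta$ is a free parameter of the family (and a minimax lower bound may be established with a family depending on $s$), I would set $\delta=\tfrac{\log 2}{16}\cdot\tfrac{J-8}{J}\cdot\tfrac{\vep}{s}e^{\tb/\vep}$, which makes the bracket equal $\tfrac{J-8}{2J}$ --- positive precisely because $J>8$ --- and substituting gives the claim with $C=\tfrac1{512}\log 2$.

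The only genuinely nonstandard point is the second paragraph: the in‑family KL control of \thmref{thm:kl-bound} is too weak to sit between two hypotheses, so it must be traded for a uniform comparison against an out‑of‑family upper envelope, and this must then be combined with \emph{Fano} rather than with an Assouad‑type coordinatewise bound. An Assouad argument would need each pairwise total variation to stay bounded away from $1$, forcing $\delta=O(1/J)$ and losing the $(J-8)^2/J$ factor; Fano, by contrast, tolerates a mutual information of order $J$, so $\delta$ can be kept of order $1$ in $J$. Everything else --- the partial‑sum separation estimate, the Gilbert--Varshamov packing, and the $\delta$‑optimization --- is routine.
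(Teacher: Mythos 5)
Your proof is correct, and its skeleton is the same as the paper's: a Varshamov--Gilbert packing of the jump indicators, Fano's method with half-separation radius $\frac{J\ta\delta}{16}$, a Kullback--Leibler bound of order $J\frac{\delta}{\vep}e^{-\tb/\vep}$, and optimization over $\delta$ at exactly the paper's value $\delta^{*}=\frac{(J-8)\log 2}{16J}\frac{\vep}{s}e^{\tb/\vep}$, giving the same final bound with $C=\frac{\log 2}{512}$. Two steps are handled differently, both refinements rather than detours. First, you derive the $L_{1}$ separation from the partial sums of the jump differences (your estimate $\sum_{j}|p_{j}|\ge\frac12 d_{\mathrm H}(\omega,\omega')$), and compensate the factor of two by taking a code of minimum distance $J/4$ rather than $J/8$; the paper asserts \eqref{eq:norm_lb} directly from a distance-$J/8$ code even though the cumulative-jump structure requires exactly this partial-sum argument, so your version is the careful one (the discrepancy only shifts the constant $C$). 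Second, and more substantively, you replace the paper's pairwise invocation of \thmref{thm:kl-bound} --- whose hypothesis $\max_{t\ge \tII}\eta(t)\le\min_{t\ge \tII}\eta'(t)$ indeed fails for generic pairs in the packing, as you point out --- by comparing every hypothesis to the single out-of-family envelope equal to $h+J\delta$ on $[t_{\II},\infty)$, applying \thmref{thm:kl-bound-general} (the hypothesis now holds with $t_c=t_{\II}$, and the Lemma~\ref{lem:e_bound}-style computation is even simpler), and then using the mutual-information form of Fano, $I(W;X)\le\mathbb{E}_{W}D(P_{X\mid W}\,\|\,Q)$ with a fixed reference $Q$. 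The paper's stated Fano lemma (\thmref{thm:fano}) asks for pairwise divergence bounds, so its proof as written leans on \thmref{thm:kl-bound} outside its hypotheses; your reference-measure device (or, alternatively, a two-sided extension of \thmref{thm:kl-bound} to non-comparable pairs) is what makes that step airtight, at no cost in the rate or constants.
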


The above theorem applies to all models in $\subfamilyIJ$.  We now consider the subset $\subfamilyIJ^M = \{\eta \in \subfamilyIJ: \Vert \eta \Vert_\infty < M\}$, which is the set of all models in $\subfamilyIJ$ that are bounded by some constant $M$.  For this family of bounded population size functions, a sharper asymptotic lower bound can be obtained as follows.

\begin{theorem}\label{thm:minimax_UK}
    Suppose $J > 8$ and $T \geq t_{\II+J-1} + \ta$.
    Then,
\begin{equation}\label{eq:main_result}
\inf_{\hat\eta} \sup_{\eta\in\subfamilyIJ^M}\mathbb{E}_{\eta} \norm{\hat{\eta}-\eta}{1,T} \geq C' \frac{(J-8)^2}{J} \frac{\tb\ta}{\log s},
\end{equation}
where $C'$ is a positive constant.
\end{theorem}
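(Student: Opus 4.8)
The plan is to obtain \thmref{thm:minimax_UK} directly from \thmref{thm:nonasymptotic}, the only new ingredient being an $s$-dependent choice of the bottleneck size $\vep$. For fixed $\ta,\tb,J$, the right-hand side of \thmref{thm:nonasymptotic} is $C\,\ta\,\frac{(J-8)^2}{J}\cdot\frac{\vep}{s}\,e^{\tb/\vep}$, and since nothing forces the family $\subfamilyIJ$ to be held fixed as $s$ grows, we are free to let $\vep$ shrink with $s$. The balancing choice is $\vep=\tb/\log s$: then $e^{\tb/\vep}=e^{\log s}=s$, so that $\frac{\vep}{s}\,e^{\tb/\vep}=\frac{\tb}{\log s}$, and \thmref{thm:nonasymptotic} yields $\inf_{\hat\eta}\sup_{\eta\in\subfamilyIJ}\mathbb{E}_\eta\norm{\hat\eta-\eta}{1,T}\ge C\,\frac{(J-8)^2}{J}\cdot\frac{\ta\tb}{\log s}$, which is the asserted estimate with $C'=C$.

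It then remains to check that, with $\vep=\tb/\log s$, the subfamily $\subfamilyIJ$ is a legitimate instance of the construction and, in addition, lies inside $\subfamilyIJ^M$ for all large $s$; given this, $\inf_{\hat\eta}\sup_{\eta\in\subfamilyIJ^M}\mathbb{E}_\eta\norm{\hat\eta-\eta}{1,T}\ge\inf_{\hat\eta}\sup_{\eta\in\subfamilyIJ}\mathbb{E}_\eta\norm{\hat\eta-\eta}{1,T}$ and the bound of the previous paragraph transfers. Every $\eta\in\subfamilyIJ$ is bounded by $\max\{N_1,\dots,N_{\II-1},\,h+J\delta\}$, the maximum being taken over the fixed sizes of the first $\II-1$ epochs (the bottleneck value $N_{\II}=\vep$ being smaller than $h$) and the values $h,h+\delta,\dots,h+J\delta$ reached in the last $J$ epochs. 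We keep the fixed sizes $N_1=\Nfix,\dots,N_{\II-1}$ below $M$, take $h$ slightly above $\vep=\tb/\log s$, and take $\delta$ to be the small quantity prescribed in the proof of \thmref{thm:nonasymptotic}, which is $O\!\big(\vep\,e^{\tb/\vep}/(sJ)\big)=O(\tb/(J\log s))$. Since $\vep\to0$ and $h+J\delta\to0$ as $s\to\infty$, for all sufficiently large $s$ we have $\|\eta\|_\infty<M$ for every $\eta\in\subfamilyIJ$, i.e.\ $\subfamilyIJ\subseteq\subfamilyIJ^M$. This completes the argument for all large $s$, which is all that \thmref{thm:minimax_UK} claims.

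The step I expect to be delicate is precisely this last feasibility check: one must arrange $\vep$, $h$, $\delta$ and the fixed epoch sizes simultaneously so that $\subfamilyIJ$ meets all the requirements imposed on it in \thmref{thm:nonasymptotic} while also staying below $M$. Conceptually, the entire improvement over the $\Omega(1/s)$ bound of \thmref{thm:nonasymptotic} comes from narrowing the bottleneck as $s$ increases: the factor $\frac{\vep}{s}e^{\tb/\vep}$ is exponentially large for small $\vep$, and equating this exponential against the $1/s$ worth of information supplied by $s$ segregating sites forces $e^{\tb/\vep}\asymp s$, i.e.\ $\vep\asymp\tb/\log s$ --- the scale below which a $\delta$-sized change hidden behind the bottleneck stops being statistically resolvable from $s$ sites. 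The hypothesis that the models lie in a bounded class is exactly what permits $\vep$ to be taken this small while $h+J\delta$ remains controlled.
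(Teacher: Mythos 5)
Your proposal is correct and takes essentially the same route as the paper: choose $\vep = \tb/\log s$ so that $\frac{\vep}{s}e^{\tb/\vep} = \frac{\tb}{\log s}$ in \thmref{thm:nonasymptotic} (with $\delta = \delta^*$), and verify that the resulting family remains bounded by $M$ for all large $s$, so the bound transfers to $\subfamilyIJ^M$. The paper's proof additionally argues that no other scaling $\vep(s)$ compatible with the boundedness constraint can yield a better rate---an optimality discussion not needed for the stated inequality---and your only slip (with this choice $\delta^* = \Theta(\tb/\log s)$, not $O(\tb/(J\log s))$) is immaterial since $J$ is fixed and $J\delta^* \to 0$ regardless.
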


By specializing $\subfamilyIJ^M$, a simplified version of \thmref{thm:minimax_UK}
can be obtained:

\begin{corollary}
\label{cor:corollary}
Suppose $T \geq t_{\II+J-1} + \ta$ and let $\subfamily_{I,\star}^M = \bigcup_{J\ge 1}\subfamilyIJ^M$. Then,
\begin{equation}
    \inf_{\hat\eta} \sup_{\eta\in\subfamily_{I,\star}^M}\mathbb{E}_{\eta} \norm{\hat{\eta}-\eta}{1,T} 
        \geq C'' (T - t_I) \frac{\tb}{\log s},
\end{equation}
where $C''$ is a positive constant.
\end{corollary}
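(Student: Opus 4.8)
The plan is to obtain the corollary as a direct consequence of \thmref{thm:minimax_UK}, combined with an optimization over the free parameters of the construction. The first, purely formal, step is to observe that for each integer $J$ the family $\subfamilyIJ^M$ (for any admissible choice of its parameters) is contained in $\subfamily_{I,\star}^M$, so that for every estimator $\hat\eta$ the supremum of $\mathbb{E}_{\eta}\norm{\hat\eta-\eta}{1,T}$ over $\subfamily_{I,\star}^M$ is at least as large as the supremum over $\subfamilyIJ^M$. Taking the infimum over $\hat\eta$ on both sides, and then the supremum over admissible $J$, gives
\[
\inf_{\hat\eta} \sup_{\eta\in\subfamily_{I,\star}^M}\mathbb{E}_{\eta} \norm{\hat{\eta}-\eta}{1,T}
\;\geq\;
\sup_{J}\;\inf_{\hat\eta} \sup_{\eta\in\subfamilyIJ^M}\mathbb{E}_{\eta} \norm{\hat{\eta}-\eta}{1,T},
\]
where the supremum on the right ranges over those $J$ for which the hypotheses of \thmref{thm:minimax_UK} hold.

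Next I would make the horizon constraint $T \geq t_{\II+J-1}+\ta$ explicit. Since the later change points are equally spaced by $\ta$, we have $t_{\II+J-1} = t_{\II} + (J-1)\ta = t_I + (J-1)\ta$, so the constraint is equivalent to $J\ta \leq T - t_I$. Given any fixed $J > 8$, I would then choose $\ta = (T-t_I)/J$ --- the largest value the horizon permits --- together with any admissible choice of the remaining parameters that keeps $\subfamilyIJ^M$ nonempty. Substituting this value of $\ta$ into the conclusion of \thmref{thm:minimax_UK} yields
\[
\inf_{\hat\eta} \sup_{\eta\in\subfamilyIJ^M}\mathbb{E}_{\eta} \norm{\hat{\eta}-\eta}{1,T}
\;\geq\; C'\,\frac{(J-8)^2}{J}\cdot\frac{\tb}{\log s}\cdot\frac{T-t_I}{J}
\;=\; C'\Bigl(1-\frac{8}{J}\Bigr)^{2}\,\frac{\tb\,(T-t_I)}{\log s}.
\]

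Finally, letting $J\to\infty$ (or fixing a single large value such as $J=16$, for which $(1-8/J)^2 \geq 1/4$), the prefactor $(1-8/J)^2$ is bounded below by a universal positive constant; taking the supremum over $J$ in the first display then produces the claimed inequality
\[
\inf_{\hat\eta} \sup_{\eta\in\subfamily_{I,\star}^M}\mathbb{E}_{\eta} \norm{\hat{\eta}-\eta}{1,T}
\;\geq\; C''\,(T-t_I)\,\frac{\tb}{\log s},
\]
with $C''=C'$ (respectively $C''=C'/4$ in the fixed-$J$ variant). The step I expect to require the most care --- the main obstacle --- is verifying that the constant $C'$ supplied by \thmref{thm:minimax_UK} is genuinely universal: that it neither degrades as $J\to\infty$ and $\ta = (T-t_I)/J \to 0$, nor carries a hidden dependence on the auxiliary parameters one must tune (for instance the jump size $\delta$, which in the proof of \thmref{thm:minimax_UK} is chosen as a function of $s$) in a way that would obstruct the inclusion $\subfamilyIJ^M \subseteq \subfamily_{I,\star}^M$ used in the first step. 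Once that is settled, the remainder is exactly the one-line optimization over $J$ carried out above.
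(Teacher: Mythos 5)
Your proposal is correct and follows essentially the same route as the paper's own proof: restrict to a single $\subfamilyIJ^M$ inside the union, set $\ta = (T-t_I)/J$ so that the horizon constraint is saturated, and absorb the factor $(J-8)^2/J^2 = (1-8/J)^2$ into the constant by fixing $J$ sufficiently large (the paper phrases this via a constant $c\in(0,1)$ with $(J-8)/J > c$ and $C'' = C'c^2$, which is the same device as your fixed-$J$ variant). Your flagged concern about $C'$ is resolved by tracing the paper's proofs: $C'$ descends from the universal Fano constant and the choice $\vep = \tb/\log s$, so it carries no hidden dependence on $J$, $\ta$, or $\delta$.
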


Note that the above lower bounds do not depend on the dimension of the SFS  (which is equal to $n-1$).
Hence, for a fixed number $s$ of segregating sites considered, using more individuals does not help to reduce the error bounds.

\section{\label{sec:discuss}Discussion}
In this paper, we have theoretically characterized fundamental limits on the accuracy of demographic inference from data. 
The paper that most closely relates to the
present work is by Kim \emph{et al.}~\cite{Kim201526}, who obtain lower
bounds on the amount of \emph{exact} coalescence time data necessary to
distinguish between size histories in a hypothesis testing framework.
Since coalescence times are never observed and must be estimated
from data, these bounds place a limit on the accuracy with which a
population size function can be inferred. The authors also describe an
estimator which uses coalescence times (again observed without noise)
to accurately recover the underlying population size function with
high probability, at a rate that roughly matches the lower bound.
The results presented here are complementary to those of Kim \emph{et al.}~in 
the following sense: access to coalescence times is equivalent
to knowing the site frequency spectrum, so that their results apply
to SFS-based estimators as well as other types of procedures which
also incorporate e.g.~linkage information. By focusing specifically on
SFS-based estimators, we are able to more precisely quantify the effects
of sample size and underlying demography on the ability to accurately
reconstruct a population size function. Ultimately, we derive minimax
lower bounds on the error of estimating a size history using the SFS.

Another line of work centers around the \emph{identifiability} of
the parameter $\eta(t)$ using the SFS. Roughly speaking, a family of
statistical models $\{P_\theta\}_{\theta \in \Theta}$ defined over a
parameter space $\Theta$ is identifiable if for any $\theta_1,\theta_2
\in \Theta$ with $\theta_1 \neq \theta_2$, the sampling distributions
induced by $P_{\theta_1}$ and $P_{\theta_2}$ are different. In our
context this simply says that, for all $n$, $\bfsfs_n^{(\eta_1)} \neq
\bfsfs_n^{(\eta_2)}$ unless $\eta_1 = \eta_2$ almost everywhere.
Standard desiderata for statistical estimators (e.g., consistency or
unbiasedness) are impossible without identifiability, so it is the
weakest possible regularity condition one can impose on a useful family
of models.

Perhaps surprisingly, it turns out that in general a population size function is not identifiable
from the SFS \cite{myers2008can}. Indeed, for any given $\eta(t)$, it has been shown
that an infinite number of smooth functions $F(t)$ exist such that
$\bfsfs_n^{(\eta)} = \bfsfs_n^{(\eta + F)}$.
Moreover, explicit
examples can be constructed which demonstrate this phenomenon
\cite{myers2008can}. On the other hand, these counter-examples consist
of functions which exhibit an unbounded frequency of oscillatory behavior near the present time, which is
perhaps unrealistic when modeling naturally occurring populations. More
recently, it has been shown \cite{bhaskar2014descartes} that identifiability holds for many classes
of population size functions employed by practitioners (including
piecewise-constant, -exponential and -generalized-exponential).
Furthermore, the number $n$ of sampled individuals sufficient for identifiability
can be explicitly given and is a function of the complexity of the
underlying class of models being studied \cite{bhaskar2014descartes}.

Identifiability asserts that, given an \emph{infinite} amount of data
(specifically, taking the number of segregating sites $s\to\infty$), the
model parameter $\eta(t)$ can be uniquely recovered. In practice, $s$ is
finite and only a perturbed version of the expected frequency spectrum,
say $\hat{\bfsfs}_n^{(\eta)}$, is observed. From a practical standpoint,
it is important to understand how these perturbations ultimately affect
the parameter estimate $\hat{\eta}(t)$. It is this question that forms
the starting point for the present work.

We have shown that the minimax error rate for estimating the
piecewise-constant demography of a single population is at least $O(1
/ \log s)$, where $s$ is the number of independent segregating sites
analyzed. In contrast, the minimax error for many classical estimation
problems in statistics (for example, non-parametric regression or density
estimation) decays inverse-polynomially in the sample size \cite{tsybakov}. 
Compared with these problems, exponentially more samples would be
required to estimate a population size history function to within a
similar magnitude of error.

A single population evolving under a piecewise-constant demography
is a special case of many richer classes of demographic models.
For example, it is a (limiting) member of the family of
exponential growth models, seen by taking each exponential growth
parameter to zero. In the multi-species coalescent setting
\cite{chen2012joint,excoffier2013robust}, multiple population size
histories must be estimated, and the error of that estimate must
necessarily be lower bounded by that of estimating a single such
history. Thus, our result can be expected to apply to a broader class of
models than the one we have studied here.

As detailed in \sref{sec:proofs}, the result in \thmref{thm:minimax_UK}
follows from setting $\vep = \tb/\log s$ and $\delta \propto \frac{\vep}{s} \exp(\tau_B/\vep)$ in
the subfamily $\subfamilyIJ^M$. The size $\tb/\log s$ is in coalescent
units. In terms of the number of individuals, it is proportional to
$\gb/\log s$, where $\gb$ is the number of generations corresponding to
duration $\tb$ in the coalescent limit. Intuitively, as the severity
of the bottleneck increases, the population is increasingly likely to
find its most recent common ancestor (MRCA) during that time; further
back in time than the MRCA, no information is conveyed concerning the
demographic events experienced by the population.

One might object to considering models with a bottleneck size that
scales inversely with the number $s$ of segregating sites in the
data, and it is indeed possible that a better convergence rate may be
achievable for populations which are known not to contain a bottleneck.
On the other hand, we note that $1/\log s$ decreases sufficiently
slowly with $s$ that our result can be expected to apply to many
real-world examples. For example, for $s\approx 10^{8}$, which is
a conservative upper bound for most organisms, $\gb/\log s \approx
0.054 \gb$. This implies that for populations which have experienced
roughly an order-of-magnitude increase in effective population size
during their history, accurate estimation of demographic events which
occurred before this expansion is difficult using SFS-based methods.
Additionally, an interesting aspect of our work is that our minimax lower
bounds do not depend on the number $n$ of sampled
individuals; increasing $n$ is not enough to overcome
the information barrier imposed by the presence of a bottleneck. This is
intuitively plausible since, as $n$ increases, the $(n+1)$th sampled
lineage becomes more likely to coalesce early on.

An interesting question which we have not attempted to analyze is
whether the $O(1 / \log s)$ rate is optimal, i.e.~whether there
exists some estimator $\hat{\eta}(t)$ which achieves the minimax
lower bound established here. In practice, from equations \eqref{eq:qnb_polanski},
\eqref{eq:S_k}, and \eqref{eq:ETmm_eta}, it can be seen that naively
maximizing the likelihood of the observed SFS with respect to $\eta(t)$
requires solving a non-convex optimization problem, so that convergence
to the global maximum is not even guaranteed. Computational issues aside,
finding such an estimator remains an open theoretical challenge.

In closing, we stress that our result is specific to SFS-based
estimators, which analyze only independent sites.
The main allure of these estimators is their mathematical
tractability, rather than their realism. In fact, a rich source
of additional information exists in the correlation structure
found among linked sites in the genome. Methods which seek to
exploit this structure by modeling the action of recombination
pose greater mathematical and computational
difficulties, but there has been recent progress in this area
\cite{li2011inference,paul2011accurate,sheehan2013estimating,steinruecken2013sequentially,rasmussen2014genome,schiffels2014inferring}.
Our result serves to underscore the importance of pursuing ever more
realistic models of genomic evolution, challenging though they may be.

\section{Proofs}
\label{sec:proofs}

\begin{proof}[Proof of \thmref{thm:kl-bound-general}]
	To simplify the notation, we write $\bffc=\bffc^{(\eta)}$ and $\bffc'=\bffc^{(\eta')}$.	 Then, using \eqref{eq:qnb_polanski} we can write
\begin{equation*}
	D(\eta\Vert\eta') =\sum_{b=1}^{n-1}\sfs_{n,b}^{(\eta)}\log\frac{\sfs_{n,b}^{(\eta)}}{\sfs_{n,b}^{(\eta')}}
	 =\sum_{b=1}^{n-1}\sfs_{n,b}^{(\eta)}\left[\log\left(\frac{\left\langle \bffc,\mathbf{W}_{n,b}\right\rangle }{\left\langle \bffc',\mathbf{W}_{n,b}\right\rangle }\right)+\log\left(\frac{\left\langle \bffc',\mathbf{V}_n \right\rangle }{\left\langle \bffc,\mathbf{V}_n\right\rangle }\right)\right].
\end{equation*}
The assumption $\min_{t > t_c}\eta'(t)\ge\max_{t> t_c}\eta(t)$ implies
that, for all times $t,t' > t_c$, the instantaneous rate of coalescence
at time $t$ in model $\eta$ is $\geq$ the instantaneous rate of
coalescence at time $t'$ in model $\eta'$. Hence, this assumption
together with $\eta(t) = \eta'(t)$ for all $0\leq t \leq t_c$ implies
$\ip{\bffc-\bffc',\bfW_{n,b}} \leq 0$ for all $1 \leq b \leq n-1$; equivalently,
    $\log\left(\frac{\left\langle \bffc,\mathbf{W}_{n,b}\right\rangle }{\left\langle \bffc',\mathbf{W}_{n,b}\right\rangle }\right) < 0$.
    Additionally, $\frac{\left\langle \bffc'-\bffc,\mathbf{V}_n\right\rangle }{\left\langle \bffc,\mathbf{V}_n\right\rangle } > -1$ and $\log(1+x) \leq x$ for all $x\geq -1$. Combining these facts, we obtain
\begin{equation*}
    D(\eta\Vert\eta') \le 
    \sum_{b=1}^{n-1}\sfs_{n,b}^{(\eta)}\log\left(\frac{\left\langle \bffc',\mathbf{V}_n \right\rangle }{\left\langle \bffc,\mathbf{V}_n\right\rangle }\right)
    \le \sum_{b=1}^{n-1}\sfs_{n,b}^{(\eta)}\frac{\left\langle \bffc'-\bffc,\mathbf{V}_n\right\rangle }{\left\langle \bffc,\mathbf{V}_n\right\rangle } 
    = \frac{\left\langle \bffc'-\bffc,\mathbf{V}_n\right\rangle }{\left\langle \bffc,\mathbf{V}_n\right\rangle },
\end{equation*}
where we have used $\sum_{b=1}^{n-1}\sfs_{n,b}^{(\eta)}=1$ in the final equality.
\end{proof}

\begin{proof}[Proof of \lemref{lem:e_bound}]
We distinguish two particular models $\etalower,\etaupper\in\subfamilyIJ$ which are the \emph{lower} and the \emph{upper} envelopes of $\subfamilyIJ$.   
The function $\etalower$ stays constant at $h$ for all $t \geq \tII$, while $\etaupper$ jumps upward by $\delta$ at every change point $t_{\II},\ldots,t_{\II+J-1}$.  Hence, $\etalower\le\eta\le\etaupper$ pointwise for all $\eta\in\subfamilyIJ$.
The two enveloping functions will form the basis of subsequent analysis.  

Fix $\eta,\eta'\in \subfamilyIJ$ and note that, by the definition of
$\subfamilyIJ$, one of these functions must pointwise-dominate the other.
So, assume without loss of generality that $\eta(t)\le\eta'(t)$ for all
$t$. Then for all $t$,
\[
	\etalower(t)\le\eta(t)\le\eta'(t)\le\etaupper(t),
\] 
which implies
\[
	\fc_{m}^{(\etalower)}\le \fc_{m}^{(\eta)}\le \fc_{m}^{(\eta')}\le \fc_{m}^{(\etaupper)},
\]
for all $m=2,\ldots,n$.  Using these inequalities, we conclude
\[
	\fc_{m}^{(\eta')}-\fc_{m}^{(\eta)}\le \fc_{m}^{(\etaupper)}-\fc_{m}^{(\etalower)},
\]
so it suffices to demonstrate \eqref{eq:e1-e2 bound} for $\fc_{m}^{(\etaupper)}-\fc_{m}^{(\etalower)}$.
Now, by equation (\ref{eq:ETmm_eta}) and the definition of $\etalower$,
\begin{align*}
	a_{m}\fc_{m}^{(\etalower)} & = \sum_{i=1}^{\II} N_i \left[e^{-a_m \Sk{\etalower}{i-1}} - e^{-a_m \Sk{\etalower}{i}}\right] + 
	\sum_{j=1}^{J} h \left[e^{-a_m \Sk{\etalower}{I+j-1}} - e^{-a_m \Sk{\etalower}{I+j}}\right]\\
	 & = \sum_{i=1}^{\II} N_i \left[e^{-a_m \Sk{\etalower}{i-1}} - e^{-a_m \Sk{\etalower}{i}}\right] + h e^{-a_m \Sk{\etalower}{\II}},
\end{align*}
where we have used equation (\ref{eq:e_Sk=00003D0}). Similarly,
\begin{align*}
	a_{m}\fc_{m}^{(\etaupper)}  & = \sum_{i=1}^{\II} N_i \left[e^{-a_m \Sk{\etaupper}{i-1}} - e^{-a_m \Sk{\etaupper}{i}}\right]\\
	& \hspace{1cm} + 
	\sum_{j=1}^{J} (h+j\delta) \left[e^{-a_m \Sk{\etaupper}{I+j-1}} - e^{-a_m \Sk{\etaupper}{I+j}}\right]\\
	 & = \sum_{i=1}^{\II} N_i \left[e^{-a_m \Sk{\etaupper}{i-1}} - e^{-a_m \Sk{\etaupper}{i}}\right] 
	 + h e^{-a_m \Sk{\etaupper}{\II}}\\
	&\hspace{1cm} +
	\sum_{j=1}^{J} j\delta \left[e^{-a_m \Sk{\etaupper}{I+j-1}} - e^{-a_m \Sk{\etaupper}{I+j}}\right].
\end{align*}
Now, using the fact that $\etalower$ and $\etaupper$ agree on the first $I$ epochs, we obtain
\begin{align*}
	a_m[ \fc_{m}^{(\etaupper)}-\fc_{m}^{(\etalower)}] & = \sum_{j=1}^{J} j\delta \left[e^{-a_m \Sk{\etaupper}{I+j-1}} - e^{-a_m \Sk{\etaupper}{I+j}}\right] \\
    & = \delta \sum_{j=1}^{J} e^{-a_m \Sk{\etaupper}{I+j-1}}\\
    & \le J \delta e^{-a_m \tb/\vep},
\end{align*}
where the second line follows from telescoping and the fact $\Sk{\etaupper}{I+J}=\infty$, while the last line follows from the fact that  $\frac{\tb}{\vep} \leq \Sk{\etaupper}{I+j-1}$ for all $j=1,\ldots, J$.
\end{proof}

\begin{proof}[Proof of \thmref{thm:kl-bound}]
For ease of notation, define $\bffc=\bffc^{(\eta)}$ and $\bffc'=\bffc^{(\eta')}$.  By Lemma \ref{lem:e_bound},
\begin{align*}
	\ip{\bffc'-\bffc,\mathbf{V}_n} = 
	\sum_{m=2}^{n}(\fc'_{m}-\fc_{m})V_{n,m}
	& \leq J \delta \sum_{m=2}^{n} \frac{V_{n,m}}{a_m} e^{-a_m \tb/\vep}\\
	& \leq J \delta e^{-\tb/\vep} \sum_{m=2}^{n} \frac{V_{n,m}}{a_m},
\end{align*}
where the second inequality follows from $e^{-a_m \tb/\vep} \leq e^{-\tb/\vep}$ for all $m=2,\ldots,n$.
Now, noting that $\sum_{m=2}^{n} \frac{V_{n,m}}{a_m}$ corresponds to the total tree length for the constant population size function $\eta\equiv 1$ and using \eqref{eq:c.V_const}, we obtain
\begin{equation}
	\ip{\bffc'-\bffc,\mathbf{V}_n} \leq J \delta e^{-\tb/\vep} 2 H_{n-1}.
	\label{eq:numerator}
\end{equation}
To finish the proof, recall that $\ip{\bffc,\mathbf{V}_n}$
is the total expected branch length of the coalescent tree under model
$\eta$.  Since  $\min_{t}\eta(t) = \vep,$ we have that $\left\langle \bffc,\bfV_n\right\rangle $
is at least as large as the corresponding quantity under a model with constant
population size $\vep$.  By \eqref{eq:c.V_const}, the total expected tree length under the latter model equals
$2 \vep H_{n-1}$. Thus, $\ip{\bffc,\bfV_n} \geq 2\vep H_{n-1}$ and combining this result with \eqref{eq:numerator} gives
\[
	\frac{\left\langle \bffc'-\bffc,\mathbf{V}_n\right\rangle}{\ip{\bffc,\bfV_n}} 
	\leq J \frac{\delta}{\vep} e^{-\tb/\vep}.
\]
Finally, \eqref{eq:kl-bound} follows from this inequality and \thmref{thm:kl-bound-general}.
\end{proof}

\begin{proof}[Proof of \thmref{thm:nonasymptotic}]
Our proof uses a generalized form of Fano's inequality \cite{yu1997assouad}.
Adapted to our setting and notation, the method reads as follows.
\begin{theorem}[Fano's method]
\label{thm:fano}
Consider a space $\cM$ of population size models. Let $r\ge2$ be
an integer, and let $S^{r}_n=\{ \eta_1,\eta_2,\dots,\eta_{r}\}
\subset \cM$ contain $r$ population size functions such that for
all $a\neq b$, $\Vert\eta_{a}-\eta_{b}\Vert_{*}\ge \alpha_{r}$ and
$D(\sfs_n^{(\eta_{a})}\Vert \sfs_n^{(\eta_{b})})\le \beta_{r}$. Let
$\hat{\eta}^{(n,s)} = \hat{\eta}^{(n,s)}(X_1,\ldots,X_s)$ be an
estimator of $\eta$ based on the SFS data $X_1,\ldots,X_s$ sampled
independently from $\sfs^{(\eta)}_n$; i.e., $X_1,\ldots,X_s$ are SFS
data for $n$ individuals at $s$ independent segregating sites. Then,
\begin{equation}
    \label{eq:technical_bound}
\inf_{\hat{\eta}} \sup_{\eta \in \cM}\mathbb{E}_{\eta}\Vert\hat{\eta}^{(n,s)}-\eta\Vert_{*} \ge   \frac{\alpha_{r}}{2}\Big(1-\frac{s\cdot \beta_{r}+\log2}{\log r}\Big).
\end{equation}
\end{theorem}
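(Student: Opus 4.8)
The plan is to prove this generalized Fano bound by the standard two-step reduction from estimation to multiway hypothesis testing, followed by the information-theoretic Fano inequality. First I would convert an arbitrary estimator $\hat\eta$ into a test $\psi$ of the $r$ hypotheses $\{\eta_1,\ldots,\eta_r\}$ by letting $\psi(X)$ be the index $a$ minimizing $\Vert\hat\eta(X)-\eta_a\Vert_*$, ties broken arbitrarily. The $\alpha_r$-separation of the packing set is exactly what makes this useful: if the true model is $\eta_a$ but $\psi(X)=c\neq a$, then since $\eta_c$ is the nearest packing point, the triangle inequality gives $\alpha_r\le\Vert\eta_a-\eta_c\Vert_*\le\Vert\eta_a-\hat\eta(X)\Vert_*+\Vert\hat\eta(X)-\eta_c\Vert_*\le 2\Vert\hat\eta(X)-\eta_a\Vert_*$, so $\Vert\hat\eta(X)-\eta_a\Vert_*\ge\alpha_r/2$ on the event of a testing error. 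By Markov's inequality this yields, for each $a$, $\mathbb{E}_{\eta_a}\Vert\hat\eta-\eta_a\Vert_*\ge\frac{\alpha_r}{2}P_{\eta_a}(\psi\neq a)$, and bounding the supremum over $\cM$ below by the average over the packing set reduces the problem to lower bounding the average testing error $\frac{1}{r}\sum_{a=1}^{r}P_{\eta_a}(\psi\neq a)$, uniformly over all tests $\psi$.

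Next I would invoke Fano's inequality in its information-theoretic form. Placing a uniform prior $V$ on $\{1,\ldots,r\}$ and writing $X=(X_1,\ldots,X_s)$ for the data, the average error probability of any test is bounded below by $1-\frac{I(V;X)+\log 2}{\log r}$, where $I(V;X)$ is the mutual information between the hypothesis index and the observed SFS data. This is the one genuinely information-theoretic ingredient, obtained from the chain rule for entropy and the fact that conditioning reduces entropy; it is precisely the form quoted in \cite{yu1997assouad}.

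It remains to control $I(V;X)$ using the per-site KL hypothesis $\beta_r$, and this is where the linear dependence on $s$ enters. Writing $P_{\eta_a}=(\sfs_n^{(\eta_a)})^{\otimes s}$ for the law of the $s$ independent sites, the mutual information under the uniform prior equals $\frac{1}{r}\sum_{a}D(P_{\eta_a}\Vert\bar P)$ with $\bar P=\frac{1}{r}\sum_b P_{\eta_b}$ the mixture. Convexity of $Q\mapsto D(P\Vert Q)$ gives $D(P_{\eta_a}\Vert\bar P)\le\frac{1}{r}\sum_b D(P_{\eta_a}\Vert P_{\eta_b})$, whence $I(V;X)\le\frac{1}{r^2}\sum_{a,b}D(P_{\eta_a}\Vert P_{\eta_b})$. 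Tensorization of the KL divergence across the $s$ independent sites turns each summand into $s\,D(\sfs_n^{(\eta_a)}\Vert\sfs_n^{(\eta_b)})\le s\beta_r$ (the $r$ diagonal terms vanishing), so that $I(V;X)\le\frac{r-1}{r}s\beta_r\le s\beta_r$. Substituting this into the Fano bound of the previous paragraph and then into the estimation-to-testing reduction yields $\frac{\alpha_r}{2}\big(1-\frac{s\beta_r+\log 2}{\log r}\big)$, and since this lower bound holds uniformly over every $\hat\eta$ it survives taking the infimum, giving \eqref{eq:technical_bound}.

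The main obstacle is the information-theoretic Fano step of the second paragraph; the testing reduction and the convexity-plus-tensorization bound on the mutual information are routine once the sampling model is fixed. Because the result is quoted essentially verbatim from \cite{yu1997assouad}, the cleanest route is to cite that inequality directly and devote the real verification to the identification $I(V;X)\le s\beta_r$, confirming that our assumption of $s$ \emph{independent} segregating sites is what produces the factor $s$ multiplying the single-site divergence bound $\beta_r$.
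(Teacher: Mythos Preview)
The paper does not actually prove this theorem: it is stated as a quoted tool from \cite{yu1997assouad}, adapted to the paper's notation, and then immediately applied. Your sketch is the standard and correct derivation (estimation-to-testing reduction via the $\alpha_r$-packing, Fano's inequality for the uniform prior, and the convexity-plus-tensorization bound $I(V;X)\le s\beta_r$), and you yourself observe that citing \cite{yu1997assouad} is the cleanest route; there is no discrepancy to report.
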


This theorem places a lower bound on the minimax rate of convergence of
a population size history estimator based on the SFS.

For $\eta\in\subfamilyIJ$, let $w_j$ denote the variable $\in\{0,1\}$
indicating whether $\eta$ jumps by $\delta$ at change point
$t_{\II+j}$. Let $\cY = \{ \bfw=(w_0,\ldots,w_{J-1}) \mid w_i
\in \{0,1\} \}$, where $J\geq 8$. By the Varshamov-Gilbert
lemma (see \cite[Lemma 4.7]{massart2007concentration}),
there exist $\cX=\{\bfw^0,\ldots,\bfw^M\} \subset \cY$ such
(i) $\bfw^0=(0,\ldots,0)$, (ii) $M \geq 2^{J/8}$, and (iii)
$H(\bfw^i,\bfw^j) \geq J/8$, where $H(\cdot,\cdot)$ denotes the Hamming
distance.

Let $\subfamilyIJ^\cX$ denote the subset of $2^{J/8}+1$ functions
in $\subfamilyIJ$ with the indicator variable for $\delta$-jumps at
$t_{\II},\ldots,t_{\II+J-1}$ given by $\bfw\in\cX$. Then, for any two
$\eta_{a}\neq\eta_{b}\in\subfamilyIJ^\cX$, we have
\begin{equation}
\norm{\eta_{a}-\eta_{b}}{1,T}\ge \frac{J}{8} \cdot \ta\cdot \delta .
\label{eq:norm_lb}
\end{equation}

\noindent
Using \thmref{thm:fano} via \eqref{eq:norm_lb} and \thmref{thm:kl-bound}, we obtain
\begin{align}
\inf_{\hat{\eta}} \sup_{\eta \in \subfamilyIJ}\mathbb{E}_{\eta}\Vert\hat{\eta}^{(n,s)}-\eta\Vert_{1,T}  
&\ge \frac{J\cdot\ta\cdot \delta}{16}  \left[1 - \frac{s J \frac{\delta}{\vep} e^{-\tb/\vep} + \log 2}{\log(2^{J/8}+1)} \right] \nonumber \\
&\ge \frac{J\cdot\ta\cdot \delta}{16}  \left[1 - \frac{s J \frac{\delta}{\vep} e^{-\tb/\vep} + \log 2}{ \frac{J}{8}\log 2} \right].\label{eq:lb_with_delta}
\end{align}
We now optimize the bound with respect to $\delta$. A straightforward
calculation shows that the maximum is attained at
\begin{equation}\label{eq:delta_star}
    \delta^{*}=  \frac{(J-8) \log 2}{16J} \left( \frac{\vep}{s}\right) e^{\tb / \vep}
\end{equation}
and setting $\delta = \delta^{*}$ in \eqref{eq:lb_with_delta} yields the result.
\end{proof}

\begin{proof}[Proof of \thmref{thm:minimax_UK}]
    The result is obtained by scaling $\vep$ with the number of
    segregating sites $s$. Denote this scaling by $\vep(s)$; we will
    determine $\vep(s)$ which produces the largest possible lower bound.
    Starting from Equation \eqref{eq:delta_star} in the proof of
    \thmref{thm:nonasymptotic}, note that $\delta^*$ scales as
    $\frac{\vep}{s}e^{\tb / \vep} =: f(\vep)$. In order to satisfy the constraint
    that $\Vert \eta \Vert_\infty < M$ for all $\eta \in \subfamilyIJ^M$ and $s$,
    the condition
    \begin{equation}
        \label{eq:bounded_growth}
        \limsup_{s\to\infty} \max\Big\{\frac{\vep(s)}{s}e^{\tb / \vep(s)}, \vep(s)\Big\} < \infty
    \end{equation}
    must therefore hold. This implies that $\vep(s)s^p \to \infty$ as $s\to\infty$
    for all $p > 0$. Suppose that $q \eqdef \liminf_{s\to\infty} \frac{\vep(s) \log s}{\tau_B}
    < 1$; note that $\vep(s)>0$ implies $q > 0$. Then there exists
    a diverging sequence $s_1,s_2,\dots \to \infty$ with $\log(s_i) <
    \frac{1 + q}{2}\frac{\tau_B}{\vep(s_i)}$ for all $i$, whence
    \[
        \limsup_{s\to\infty} \frac{\vep(s)}{s}e^{\tau_B / \vep(s)} \ge 
        \limsup_{i\to\infty} \frac{\vep(s_i)}{s_i}e^{\frac{2}{1+q}\log(s_i)} =
        \limsup_{i\to\infty} \vep(s_i)s_i^{\frac{1-q}{1+q}} = \infty.
    \]
    From this it follows that $\vep(s) \ge \tb / \log s$ for
    sufficiently large $s$. Now, on the interval $(0, \infty)$, the
    function $f(\vep)$ is convex with a unique minimum at $\vep = \tb$.
    Let $\vep'$ be a point where $f(\vep') > f(\tb / \log s) = \tb /
    \log s$. Then $\vep' \notin [\frac{\tb}{\log s}, \tb]$. If $\vep' > \tb$,
    then $f(\vep') < \frac{\vep'}{s}e^1$.  Since $\frac{\tb}{\log s} < f(\vep')$, we then conclude $\vep' >
    \frac{s \tb}{e^1 \log s}$, which is not bounded as $s\to\infty$.

    In summary, we see that the largest possible lower bound which obeys 
    \eqref{eq:bounded_growth} must have $f(\vep)$ asymptotically $\le \tb / \log s$,
    and that this bound is achieved by setting $\vep(s)=\tb / \log s$.
    Plugging this in to equation \eqref{eq:technical_bound} yields the claim.
\end{proof}

\begin{proof}[Proof of Corollary \ref{cor:corollary}]
	For $c \in (0, 1)$, choose $J$ large enough so that $(J - 8)/ J > c$, and fix $\ta$ so that $T = t_I + J \ta$. Then $(J - 8)\ta \ge cJ \ta = c(T - t_I)$. 
	Substituting the above inequalities into equation \eqref{eq:main_result} and letting $C'' = C'c^2$ yields the desired result.
\end{proof}

\section*{Acknowledgments}
We thank Anand Bhaskar for helpful comments on a draft of this paper and for suggesting Corollary~\ref{cor:corollary} to simplify the presentation of the main result.  We also thank Jack Kamm and Jeff Spence for useful feedback.  

\clearpage

\bibliographystyle{imsart-number} 
\bibliography{yss-group,minimax}
\end{document}